\documentclass[12pt]{article}
\usepackage{graphicx}
\usepackage{amssymb, amsmath, amsthm}

\newtheorem{theorem}{Theorem}
\newtheorem{lemma}[theorem]{Lemma}

\begin{document}

\title{About the k-Error Linear Complexity over $\mathbb{F}_p$ of
 sequences of length 2$p$ with optimal three-level
autocorrelation}
\author{Vladimir Edemskiy}

\maketitle

\begin{abstract}
We investigate the $k$-error linear complexity over $\mathbb{F}_p$
of binary sequences of length $2p$ with optimal three-level
autocorrelation. These balanced sequences  are constructed by
cyclotomic classes of order four using a method presented by  Ding
et al.

 \noindent \textbf{Keywords}: binary sequences,  linear
complexity, cyclotomy

\noindent \textbf{Mathematics Subject Classification (2010)}: 94A55,
 94A60, 11B50.

\end{abstract}

\section{Introduction}

\label{in} Autocorrelation is an important measure of pseudo-random
sequence for their application in code-division multiple access
systems, spread spectrum communication systems, radar systems and so
on \cite{GG}. An important problem in sequence design is to find
sequences with optimal autocorrelation. In their paper, Ding et al.
\cite{DHM} gave several new families of binary sequences of period
$2p$ with optimal autocorrelation $\{-2,2\}$.

The linear complexity is another important characteristic of
pseudo-random sequence,  which is significant  for cryptographic
applications. It is defined as the length of the shortest linear
feedback shift register that can generate the sequence \cite{LN}.
The linear complexity  of above-mention sequences over the finite
field of order two was investigated in \cite{ZZ} and in \cite{EP}
over the finite field $\mathbb{F}_p$ of $p$ elements and other
finite fields. However, high linear complexity can not guarantee
that the sequence is secure. For example, if changing one or few
terms of a sequence can greatly reduce its linear complexity, then
the resulting key stream would be cryptographically weak. Ding et
al. \cite{DXS} noticed this problem first in their book, and
proposed the weight complexity and the sphere complexity. Stamp and
Martin \cite{SM} introduced the $k$-error linear complexity, which
is the minimum of the linear complexity and sphere complexity.  The
$k$-error linear complexity of a sequence $r$ is defined by $L_k(r)
=
  \min \limits_{t} L(t),$ where the minimum of the linear complexity $L(t)$ is taken over all
$N$-periodic sequences $t =(t_n)$ over $\mathbb{F}_p$ for which the
Hamming distance of the vectors $(r_0, r_1, \dots , r_{N-1})$ and
$(t_0, t_1, \dots , t_{N-1})$ is at most $k$. Complexity measures
for sequences over finite fields, such as the linear complexity and
the k-error linear complexity, play an important role in cryptology.
Sequences that are suitable as keystreams should possess not only a
large linear complexity but also the change of a few terms must not
cause a significant decrease of the linear complexity.

In this paper we derive the $k$-error linear complexity of binary
sequences of length $2p$  from \cite{DHM} over $\mathbb{F}_p$. These
balanced sequences with optimal three-level autocorrelation are
constructed by cyclotomic classes of order four. Earlier, the linear
complexity and the $k$-error linear complexity over $\mathbb{F}_p$
of the Legendre sequences and series of other cyclotomic sequences
of length $p$ were investigated in \cite{AW,AMW}.

\section{Preliminaries}
\label{sec:1} First, we briefly repeat the basic definitions from
\cite{DHM} and the general information.

Let $p$ be a prime of the form $p\equiv 1 \pmod{4}$, and let
$\theta$ be a primitive root modulo $p$ \cite{IR}. By definition,
put $D_0=\{\theta^{4s} \bmod p; s=1,...,(p-1)/4\}$  and
$D_n=\theta^nD_0, n=1,2,3$. Then these $D_n$ are cyclotomic classes
of order four \cite{H}.

The ring of residue classes $\mathbb{Z}_{2p} \cong
\mathbb{Z}_{2}\times \mathbb{Z}_{p}$ under the isomorphism $\phi
(a)=\left (a \bmod 2, a\bmod p \right )$ \cite{IR}. Ding et al.
considered balanced binary sequences defined as
\begin{equation}
\label{eq1} u_i =\begin{cases}
 1,&\text{ if}  \hspace{4pt} i~\bmod~2p  \in C, \\
  0,&\text{ if}  \hspace{4pt} i~\bmod~2p \not \in C, \\
 \end{cases}
\end{equation}
for $C=\phi^{-1} \left ( \{0\} \times (\{0\}\cup D_m\cup D_j) \cup
\{1\} \times (D_l\cup D_j)\right )$ where $m, j,$ and $l$ are
pairwise distinct integers between 0 and 3 \cite{DHM}. Here we
regard them as  sequences over the finite field $\mathbb{F}_p$.

 By \cite{DHM}, if $\{u_i\}$ has an optimal
autocorrelation value then $p\equiv 5 \pmod{8}$ and $p=1+4y^2$,
$(m,j,l)=(0,1,2), (0,3,2), (1,0,3), (1,2,3)$ or $p=x^2+4, y=-1$,
$(m,j,l)=(0,1,3), (0,2,3), (1,2,0), (1,3,0)$. Here $x, y$ are
integers and $x\equiv 1 (\bmod~4)$.

 It
is well known \cite{CDR} that if $r$ is a binary sequence with
period $N$, then the linear complexity $L(r)$ of this sequence is
defined by
\begin{equation*}
 L(r)=N-\deg  \gcd \bigl (x^N-1,S_r(x)\bigr ),
\end{equation*}
where  $S_r(x) = r_0 + r_1x + ... + r_{N-1}x^{N-1}$. Let's assume we
investigate the linear complexity of $u$ over $\mathbb{F}_p$ and
with a period $2p$. So,
\begin{equation*}
 L(u)=2p-\deg  \gcd \bigl ((x^2-1)^p,S_u(x)\bigr ).
\end{equation*}
The weight of $f(x)$, denoted as $w(f)$, is defined as the number of
nonzero coefficients of $f(x)$.
 From  our definitions it follows
that if the Hamming distance of the vectors $(u_0, u_1, \dots ,
u_{2p-1})$ and $(t_0, t_1, \dots , t_{2p-1})$ is at most $k$ then
there exists $f(x) \in \mathbb{F}_p, \ \ w(f) \leq k$ such that
$S_t(x)=S_u(x)+f(x)$ and the reverse is also true. Therefore
\begin{equation}
\label{eq3}L_k(u)=2p-\max \limits_{f(x)} (m_0+m_1)\end{equation}
where $0\leq m_j\leq p$, $S_u(x)+f(x)\equiv 0 \pmod
{(x-1)^{m_0}(x+1)^{m_1}} $
 and $f(x)
\in \mathbb{F}_p[x],w(f) \leq k$.

Let $g$ be an odd number in the pair $\theta$, $\theta + p$, then
$g$ is a primitive root modulo $2p$ \cite{IR}. By definition, put
$H_0=\{g^{4s}~\bmod~2p; s=1,...,(p-1)/4\}$. Denote by $H_n$ a set
$g^nH_0, n=1,2,3$. Let us introduce the auxiliary polynomial
$S_n(x)=\sum_{i \in H_n} x^i$.
 The following formula was
proved in \cite{EP}.
\begin{equation}
\label{eq4} S_u(x)\equiv (x^p+1)S_{j}(x)+x^pS_{m}(x)+S_{l}(x)+1
\pmod{(x^{2p}-1)}.
\end{equation}
By \eqref{eq4} we have
\begin{equation}
\label{eq5} \begin{cases} S_u(x)\equiv 2S_{j}(x)+S_{m}(x)+S_{l}(x)+1
 \pmod {(x-1)^{p}},\\
 S_u(x)\equiv S_{l}(x)-S_{m}(x)+1
 \pmod {(x+1)^{p}}.
 \end{cases}
\end{equation}
 Let the  sequences $\{q_i\}$ and   $\{v_i\}$ be defined by
\begin{equation}
\label{eq1nn}
 q_i =\begin{cases}
 2,&\text{ if}  \hspace{4pt} i~\bmod~p  \in D_j, \\
 1,&\text{ if}  \hspace{4pt} i~\bmod~p  \in \{0\} \cup D_m \cup D_l, \\
 0,&\text{ otherwise},
 \end{cases}
  \text{  and  }
v_i =\begin{cases}
 1,&\text{ if}  \hspace{4pt} i~\bmod~p  \in \{0\} \cup D_m, \\
  -1,&\text{ if}  \hspace{4pt} i~\bmod~p  \in D_l, \\
 0,&\text{ otherwise}.
 \end{cases}
\end{equation}
 By definition, put $S_q(x)=\sum_{i=0}^{p-1} q_i
x^i$  and $S_v(x)=\sum_{i=0}^{p-1} v_i x^i$. Then by the choice of
$g$ we obtain that
\begin{equation}
\label{eq6} \begin{cases} 2S_{j}(x)+S_{m}(x)+S_{l}(x)+1\equiv
S_{q}(x) \pmod {(x-1)^{p} },\\
S_{m}(x)-S_{l}(x)+1 \equiv S_{v}(x) \pmod {(x-1)^{p} }.
 \end{cases}
\end{equation}

 As noted above, the $k$-error linear
complexity of cyclotomic sequences was investigated in \cite{AMW}.
With the aid of methods from \cite{AMW} it is an easy matter to
prove the following
\begin{equation}
\label{eq7}
 L_k(q) =\begin{cases}
 3(p-1)/4+1,&\text{ if}  \hspace{4pt} 0\leq k \leq (p-1)/4, \\
 (p-1)/2+1,&\text{ if}  \hspace{4pt} (p-1)/4+1 \leq k < (p-1)/3, \\
 1,&\text{ if}  \hspace{4pt}  k =(p-1)/2,\\
  \end{cases}
 \end{equation}
  and
  $(p-1)/4+1\leq L_k(q) \leq (p-1)/2+1$ if $(p-1)/3 \leq k
  <(p-1)/2$.
\begin{equation}
\label{eq8}
 L_k(v) =\begin{cases}
 p,&\text{ if}  \hspace{4pt} k=0, \\
 3(p-1)/4+1,&\text{ if}  \hspace{4pt} 1\leq k < (p-1)/4, \\
 (p-1)/2+1,&\text{ if}  \hspace{4pt} (p-1)/4+1 \leq k < (p-1)/3, \\
 0,&\text{ if}  \hspace{4pt}  k \geq (p-1)/2+1.
 \end{cases}
 \end{equation}
  and
 $9(p-1)/16 \leq L_{(p-1)/4}(v) \leq 3(p-1)/4+1$,  $(p-1)/4\leq L_k(v) \leq (p-1)/2$ if $(p-1)/3 \leq k <(p-1)/2$.

The following statements we also obtain  by \cite {AMW} or by Lemma
3 from \cite{EP}.

\begin{lemma}
\label{l1}
\begin{enumerate}
 \item $S_n(x)= -1/4+(x-1)^{(p-1)/4}E_{n}(x)$ and $E_{n}(1)\neq 0, n=0,1,2,3$;
 \item $S_n(x)= -1/4+(x+1)^{(p-1)/4}F_{n}(x)$ and $F_{n}(-1)\neq 0,
n=0,1,2,3$;
 \item  Let $S_l(x)+S_m(x)+g(x) \equiv 0\pmod{(x-1)^{(p-1)/4+1}}$ and
$|l-m|\neq 2$. Then $w(g(x))\geq (p-1)/4.$
 \end{enumerate}
 \end{lemma}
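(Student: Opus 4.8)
The plan is to reduce everything modulo $x^{p}-1=(x-1)^{p}$ to cyclotomic classes of $\mathbb{Z}_p^{*}$ and then expand in powers of $x-1$ (respectively $x+1$), reading off the coefficients as power sums of those classes. Write $K=(p-1)/4$ and let $i_0=\theta^{(p-1)/4}$, an element of order $4$ in $\mathbb{F}_p$, so $i_0^{2}=-1$. Since $g$ is odd with $g\equiv\theta\pmod p$, the map $i\mapsto i\bmod p$ carries $H_n$ bijectively onto $D_n$, so $x^{i}\equiv x^{\,i\bmod p}\pmod{x^{p}-1}$ yields $S_n(x)\equiv\sum_{d\in D_n}x^{d}\pmod{(x-1)^{p}}$.

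For part 1, put $y=x-1$ and expand $x^{d}=(1+y)^{d}=\sum_k\binom{d}{k}y^{k}$; the coefficient of $y^{k}$ in $S_n$ is $\sum_{d\in D_n}\binom{d}{k}$, an $\mathbb{F}_p$-linear combination of $P_0^{(n)},\dots,P_k^{(n)}$, where $P_j^{(n)}=\sum_{d\in D_n}d^{j}$. Writing $D_n=\{\theta^{\,n+4s}\}$ and summing the geometric series shows $P_j^{(n)}\equiv0$ unless $K\mid j$, with $P_0^{(n)}=|D_n|\equiv-1/4$ and $P_K^{(n)}\equiv-\tfrac14 i_0^{\,n}$. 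Since $\binom{d}{k}$ has no constant term in $d$ for $k\ge1$ and leading coefficient $1/k!$ (invertible for $k\le K<p$), the coefficient of $y^{k}$ vanishes for $1\le k\le K-1$, equals $-1/4$ at $k=0$, and equals the nonzero value $-\tfrac1{4\,K!}\,i_0^{\,n}$ at $k=K$. As divisibility by $(x-1)^{K}$ modulo $(x-1)^{p}$ lifts to a genuine factor ($K\le p$), this is part 1 with $E_n(1)\ne0$. Part 2 is the same computation in $z=x+1$: every exponent $i\in H_n$ is odd, so $x^{i}=(z-1)^{i}$ carries the global sign $(-1)^{i}=-1$, which fixes the constant and the sign of $F_n(-1)\ne0$ but leaves the order of vanishing equal to $K$.

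For part 3 the same expansion drives a Vandermonde argument. Modulo $(x-1)^{K+1}$ one has $x^{p}\equiv1$ (as $K+1\le p$), so I may reduce every exponent of $g$ modulo $p$; the reduced polynomial $\bar g(x)=\sum_{r=0}^{p-1}c_r x^{r}$ satisfies the same congruence and $w(\bar g)\le w(g)$, so it suffices to bound $w(\bar g)$. Expanding $S_l+S_m+\bar g$ in $y=x-1$ and invoking part 1, the triangular change of basis from $\binom{r}{k}$ to $r^{k}$ (valid since $k!$ is invertible for $k\le K$) turns the hypothesis $S_l+S_m+\bar g\equiv0\pmod{(x-1)^{K+1}}$ into the moment conditions $\sum_{r}c_r r^{k}=0$ for $1\le k\le K-1$ and $\sum_{r}c_r r^{K}=\tfrac14\bigl(i_0^{\,l}+i_0^{\,m}\bigr)$. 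The hypothesis $|l-m|\ne2$ is exactly what makes this last value nonzero, since $i_0^{\,l}+i_0^{\,m}=0$ iff $i_0^{\,l-m}=-1$ iff $l-m\equiv2\pmod 4$.

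I would finish as follows. The point $r=0$ is invisible to all moments of order $\ge1$, so set $w^{*}=\#\{r\ne0:c_r\ne0\}\le w(\bar g)$; then $\sum_{r\ne0}c_r r^{k}=0$ for $1\le k\le K-1$ while $\sum_{r\ne0}c_r r^{K}\ne0$. If $w^{*}\le K-1$, the vanishing of the first $w^{*}$ moments is a homogeneous system whose matrix is a Vandermonde matrix in the distinct nonzero nodes $r$ with its columns scaled by those nodes, hence invertible (vacuously so if $w^{*}=0$); it forces every such $c_r$ to vanish, giving $\sum_{r\ne0}c_r r^{K}=0$ and contradicting the previous line. Therefore $w^{*}\ge K$, and $w(g)\ge w(\bar g)\ge w^{*}\ge(p-1)/4$. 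The main obstacle is the middle bookkeeping: checking that reduction modulo $p$ does not increase the weight and that the congruence translates cleanly into the stated moments, with the sole nonzero high moment governed by $i_0^{\,l}+i_0^{\,m}$, since the geometric-sum evaluation of the power sums and the concluding Vandermonde step are routine once $i_0^{\,l}+i_0^{\,m}\ne0$ is secured.
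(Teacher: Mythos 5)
The paper offers no proof of this lemma at all---it is dispatched with the single sentence that the statements ``we also obtain by [AMW] or by Lemma 3 from [EP].'' Your argument is therefore necessarily a different route: a self-contained reconstruction, and it is essentially correct. The skeleton---reduce modulo $x^p-1=(x-1)^p$ so that $H_n$ collapses onto $D_n$, expand in powers of $x\mp 1$, identify the coefficients with power sums $\sum_{d\in D_n}d^j$ over cyclotomic classes, observe these vanish unless $(p-1)/4\mid j$, and finish part 3 with a Vandermonde argument on the nonzero support of $\bar g$---is exactly the machinery behind the cited results of Aly--Meidl--Winterhof, so you are not inventing a new method so much as supplying the proof the paper outsources. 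What your version buys is that it is checkable on the spot and makes explicit where the hypothesis $|l-m|\neq 2$ enters: it is precisely the condition $i_0^{\,l}+i_0^{\,m}\neq 0$ that keeps the $(p-1)/4$-th moment nonzero. The bookkeeping you worried about (reduction mod $p$ not increasing weight, the triangular passage from $\binom{r}{k}$ to $r^k$ using the invertibility of $k!$ for $k\le (p-1)/4<p$) is all sound.

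One substantive point you should not gloss over: in part 2 your own computation gives $S_n(-1)=+1/4$, not $-1/4$. Every exponent in $H_n$ is odd, so $S_n(-1)=-|H_n|=-(p-1)/4\equiv 1/4\pmod p$; equivalently, your ``global sign $(-1)^i=-1$'' flips the constant from $-1/4$ to $+1/4$. The phrase ``which fixes the constant'' leaves it ambiguous whether you noticed that this contradicts the printed statement. In fact the printed statement appears to carry a sign typo rather than your derivation being wrong: the paper's later computations, e.g.\ $h(-1)=-1/4+1/4+1-1/2-(\rho+3)/4+(\rho+1)/4=0$ in the proof of Theorem \ref{t8} and the analogous evaluation in Theorem \ref{t9}, only balance if $S_n(-1)=1/4$. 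So state explicitly that part 2 holds with constant $+1/4$ (and $F_n(-1)\neq 0$ unchanged), rather than presenting your calculation as confirming the statement verbatim.
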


Let us introduce the auxiliary polynomial
$R(x)=\sum_{i=0}^4c_iS_{i}(x), c_i \in \mathbb{Z}$. Denote a formal
derivative of order $n$ of the polynomial $R(x)$ by $R^{(n)}(x)$.
\begin{lemma}
\label{lnov} Let $R^{(n)}(x)|_{x=\pm 1}=0$ if  $0\leq n \leq
(p-1)/4$. Then $R^{(n)}(x)|_{x=\pm 1}=0$ for $(p-1)/4+1 <n <
(p-1)/2$.
\end{lemma}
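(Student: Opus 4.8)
The plan is to convert the hypothesis on the vanishing of derivatives into the vanishing of power sums over the cyclotomic classes $D_n$, and then to observe that almost all of these power sums vanish automatically, so that the hypothesis only has to kill two of them.

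First I would reduce everything to the point $x=1$. Since $H_4=g^4H_0=H_0$ we have $S_4=S_0$, so $R$ is an integer combination of $S_0,S_1,S_2,S_3$; and since every element of every $H_n$ is odd (a power of the odd primitive root $g$ reduced modulo the even number $2p$ is odd), $R(x)$ contains only odd exponents and is therefore an odd function, $R(-x)=-R(x)$. Differentiating $n$ times gives $R^{(n)}(-1)=(-1)^{n+1}R^{(n)}(1)$, so $R^{(n)}(1)=0$ if and only if $R^{(n)}(-1)=0$, and it is enough to control the derivatives at $x=1$. Next I would pass from derivatives to power sums: writing $R(x)=\sum_i\epsilon_i x^i$ and using $R^{(n)}(1)=\sum_i\epsilon_i (i)_n$ with the falling factorial $(i)_n=\sum_{k\le n}s(n,k)i^k$, I set $T_k=\sum_i\epsilon_i i^k$. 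The relation $R^{(n)}(1)=\sum_{k\le n}s(n,k)T_k$ is unitriangular (its diagonal entries are $s(n,n)=1$), hence invertible over $\mathbb{F}_p$, so ``$R^{(n)}(1)=0$ for $0\le n\le M$'' is equivalent to ``$T_k=0$ for $0\le k\le M$''. Thus the hypothesis says exactly that $T_k=0$ for $0\le k\le (p-1)/4$, and the conclusion will follow once I show $T_k=0$ for all $0\le k<(p-1)/2$.

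The heart of the argument is the evaluation of $T_k$ modulo $p$ through the cyclotomic classes of order four. The reduction map $H_n\to\mathbb{Z}_{p}$, $i\mapsto i\bmod p$, is a bijection onto $D_n$ (two distinct odd integers in $[1,2p-1]$ cannot be congruent modulo the odd prime $p$), and $g\equiv\theta\pmod p$, so $T_k\equiv\sum_n c_n\sum_{d\in D_n}d^k\pmod p$. Writing $D_n=\{\theta^{n+4s}\}$ and putting $t:=(p-1)/4$, the inner sum is a geometric progression $\sum_{d\in D_n}d^k=\theta^{nk}\sum_{s=0}^{t-1}(\theta^{4k})^s$. Since $(\theta^{4k})^t=\theta^{k(p-1)}=1$, the base $\theta^{4k}$ is a $t$-th root of unity, nontrivial exactly when $t\nmid k$; in that case the partial sum vanishes, while when $t\mid k$ it equals $t\,\theta^{nk}$. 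Hence $\sum_{d\in D_n}d^k=0$ unless $t\mid k$, and therefore $T_k=0$ automatically whenever $t\nmid k$.

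Finally I would assemble the conclusion. In the range $0\le k<(p-1)/2=2t$ the only multiples of $t$ are $k=0$ and $k=t$, and both lie in the hypothesis range $0\le k\le t$, so that $T_0=T_t=0$; every other $T_k$ in this range vanishes automatically by the computation above. Consequently $T_k=0$ for all $0\le k<(p-1)/2$, which gives $R^{(n)}(1)=0$ for all $0\le n<(p-1)/2$, and by the odd symmetry also $R^{(n)}(-1)=0$. In particular $R^{(n)}(x)|_{x=\pm1}=0$ on the stated interval $(p-1)/4+1<n<(p-1)/2$ (indeed on the slightly larger range as well). The step I expect to be most delicate is precisely the middle paragraph: checking that the reduction $H_n\to D_n$ is a bijection and that $\theta^{4k}$ is a genuine nontrivial $t$-th root of unity for $t\nmid k$, since it is exactly this that forces all ``off-multiple'' power sums to vanish and reveals that the hypothesis really constrains only the two values $k=0$ and $k=t$.
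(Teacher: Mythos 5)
Your proof is correct, but it takes a more self-contained route than the paper's. The paper passes to the period-$p$ sequence $r_t$ (with $r_0=0$ and $r_t=c_i$ for $t\in D_i$), whose generating polynomial is congruent to $R(x)$ modulo $x^p-1=(x-1)^p$, observes that the hypothesis forces $L(r)<3(p-1)/4$, and then invokes Theorem 1 of \cite{AMW}, which says the linear complexity of such a cyclotomic sequence of order four can only take the quantized values $p-c(p-1)/4$; the next admissible value below $p-(p-1)/4$ is $p-(p-1)/2$, which is exactly the desired divisibility by $(x-1)^{(p-1)/2}$. Your argument unpacks that citation: the unitriangular Stirling-number change of basis between the derivatives at $x=1$ and the power sums $T_k$, together with the geometric-series vanishing of $\sum_{d\in D_n}d^k$ whenever $(p-1)/4 \nmid k$, is precisely the mechanism behind the quantization result in \cite{AMW}, so the two proofs rest on the same phenomenon, but yours establishes the needed special case from scratch rather than citing it. Your version also has two small advantages: it treats the point $x=-1$ explicitly (via the oddness of all exponents occurring in the $H_n$, a step the paper's period-$p$ reduction leaves implicit), and it delivers the conclusion for every $0\le n<(p-1)/2$, not only on the interval stated in the lemma.
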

\begin{proof} We consider the sequences $\{r_t\}$ of length $p$ defined by
\begin{equation*}r_{t}=\begin{cases}
 0,&\mbox{if  }  t =0, \\
 c_i,&\mbox{if  } t  \in D_i.
 \end{cases}\end{equation*}
 By the definition of the sequence, $S_r(x) \equiv R(x)
 \pmod{(x^p-1)}$, so that by the condition of this lemma
 $L(r) < 3(p-1)/4$. By Theorem 1 from \cite{AMW} for the cyclotomic sequences
  $L(r)=p-c(p-1)/4, 1 \leq c \leq 3$. Hence, $L(r) \leq p-(p-1)/2$. This completes the
 proof of Lemma \ref{lnov}.
\end{proof}
 This lemma can also be proved using Lemma 2 and 3 from \cite{EP}.

\section{The exact values of the $k$-error
linear complexity of $u$ for $1\leq k< (p-1)/4$}  \label{sec:2} In
this section we obtain the upper and lower bounds  of the $k$-error
linear complexity and determine the exact values for the $k$-error
linear complexity $L_k(u), 1\leq k< (p-1)/4$.

First of all, we consider the case $k=1$. Our first contribution in
this paper is the following.
\begin{lemma}
\label{l2}
 Let $\{u_i\}$ be defined by \eqref{eq1} for $p>5$. Then $L_1(u)=(7p+1)/4$.
 \end{lemma}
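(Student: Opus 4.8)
The plan is to use the characterization of $k$-error linear complexity given in equation \eqref{eq3}. We need to find a polynomial $f(x)$ with $w(f)\le 1$ that maximizes $m_0+m_1$, where $m_0,m_1$ are the multiplicities of the roots $x=1$ and $x=-1$ in $S_u(x)+f(x)$. Let me think about what the target value tells us: $L_1(u)=(7p+1)/4$ means the maximum of $m_0+m_1$ is $2p-(7p+1)/4=(p-1)/4$. So we're claiming we can push the combined multiplicity at $\pm 1$ up to $(p-1)/4$ but no further, using at most one nonzero coefficient in $f$.

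Let me figure out the baseline, i.e., what happens with $f=0$. From Lemma~\ref{l1}, parts 1 and 2, each $S_n(x)$ has a "defect" constant $-1/4$ plus a factor of $(x\mp 1)^{(p-1)/4}$. Using the reductions in \eqref{eq5} and \eqref{eq6}:
- At $x=1$: $S_u(x)\equiv S_q(x)\pmod{(x-1)^p}$. Now $S_q(x) = 2S_j+S_m+S_l+1$, and by Lemma~\ref{l1}(1), $2S_j+S_m+S_l \equiv 2(-1/4)+(-1/4)+(-1/4) = -1 \pmod{(x-1)^{(p-1)/4}}$, so $S_q(x)\equiv 0\pmod{(x-1)^{(p-1)/4}}$ but the next coefficient is controlled by the $E_n(1)\ne 0$ terms, so generically $m_0=(p-1)/4$ with $f=0$... wait, I need to be careful. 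Let me reconsider: $2S_j+S_m+S_l+1$ has the $-1$ cancel the $+1$, giving divisibility by $(x-1)^{(p-1)/4}$, and since $2E_j(1)+E_m(1)+E_l(1)$ need not vanish, the multiplicity at $x=1$ is exactly $(p-1)/4$ for the $x=1$ side.

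My actual plan: First I would compute $m_0$ and $m_1$ for the unmodified $S_u$ (the $f=0$, i.e. $k=0$ case) to establish a baseline, using Lemma~\ref{l1}(1) and (2) together with \eqref{eq5}. I expect one of the two reductions (say mod $(x+1)^p$, giving $S_l-S_m+1$) to behave differently because the constant $-1/4$ terms combine as $(-1/4)-(-1/4)+1 = 1 \ne 0$, forcing $m_1=0$ at $x=-1$ unless we add a correction. This asymmetry is the crux. Then I would seek a single monomial $f(x)=cx^a$ that simultaneously raises the multiplicity at $x=-1$ — ideally creating divisibility at $x=-1$ where before there was none — while not decreasing $m_0$, or trading off optimally between the two. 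The one free coefficient $c$ and exponent $a$ give us essentially two degrees of freedom (value and location) to adjust leading behaviors at the two points.

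The technical heart will be establishing the upper bound $m_0+m_1\le(p-1)/4$, i.e., that no single-term $f$ can do better. For this I would invoke Lemma~\ref{lnov}: if a combination $R(x)=\sum c_i S_i(x)$ has all derivatives through order $(p-1)/4$ vanishing at $\pm 1$, then derivatives also vanish up to order $(p-1)/2$, which would contradict the $E_n(1)\ne 0$, $F_n(-1)\ne 0$ non-vanishing from Lemma~\ref{l1}(1)-(2) once we account for the added monomial. Concretely, I would suppose for contradiction that $m_0+m_1\ge(p-1)/4+1$, write out the divisibility conditions $S_u(x)+cx^a\equiv 0$ modulo $(x-1)^{m_0}(x+1)^{m_1}$, and derive enough linear constraints on the coefficients $E_n,F_n$ to force one of the Lemma~\ref{l1} non-vanishing conditions to fail. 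The main obstacle I anticipate is handling the case distinctions among the admissible triples $(m,j,l)$ listed after \eqref{eq5}; part (3) of Lemma~\ref{l1}, with its hypothesis $|l-m|\ne 2$ and weight bound $w(g)\ge(p-1)/4$, is presumably the tool that rules out cheap cancellations, and I would need to verify that each allowed $(m,j,l)$ either satisfies $|l-m|\ne 2$ or can be handled by a symmetric argument, so that a single term (weight $1<(p-1)/4$) cannot increase the combined multiplicity beyond the baseline.
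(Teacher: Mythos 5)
Your overall framework---reduce to maximizing $m_0+m_1$ via \eqref{eq3}, compute the $f=0$ baseline ($m_0=(p-1)/4$, $m_1=0$, which recovers $L(u)=(7p+1)/4$; the paper simply cites \cite{EP} for this), then argue that no single monomial pushes $m_0+m_1$ past $(p-1)/4$---is indeed the paper's approach, and your identification of the asymmetry caused by the constant $+1$ in the reduction modulo $(x+1)^p$ is correct. However, the mechanism you propose for the crucial lower bound has a genuine gap. You plan to combine Lemma \ref{lnov} with the non-vanishing conditions $E_n(1)\neq 0$, $F_n(-1)\neq 0$ of Lemma \ref{l1}(1)--(2) to force a contradiction. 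This cannot work as stated: Lemma \ref{lnov} propagates vanishing of derivatives upward and is the tool for \emph{constructing} good error polynomials (upper bounds on $L_k$, as in Theorems \ref{t8} and \ref{t9}), not for excluding them; and the individual non-vanishing of the $E_n(1)$ and $F_n(-1)$ says nothing about the linear combinations that actually appear, namely $2E_j(1)+E_m(1)+E_l(1)$ at $x=1$ and $F_l(-1)-F_m(-1)$ at $x=-1$, which could a priori vanish even though each summand is nonzero. Their non-vanishing is precisely the content of the explicit cyclotomic evaluations in \cite{EP} and of the formulas \eqref{eq7}--\eqref{eq8} for the auxiliary sequences $q$ and $v$; one of these must be invoked. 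Lemma \ref{l1}(3), which you tentatively reach for, concerns the sum $S_l+S_m$ at $x=1$ and excludes $|l-m|=2$ (violated by half the admissible triples), so it is not the right tool either---a concern you flag but do not resolve.

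You also miss the one observation that collapses the case analysis: since $S_u(1)=S_q(1)=p\equiv 0\pmod p$, adding any monomial $ax^b$ with $a\neq 0$ gives $(S_u+ax^b)(1)=a\neq 0$, hence $m_0=0$ outright; there is no trade-off to optimize. The entire burden then falls on showing $m_1\le (p-1)/4$. Evaluating \eqref{eq5} at $x=-1$ forces $a(-1)^b=-1$, the first-derivative condition forces $b\in\{0,p\}$, so $ax^b$ is congruent to a constant modulo $(x+1)^p$ and the assumption $m_1\ge (p-1)/4+1$ becomes $S_l(x)-S_m(x)\equiv 0 \pmod{(x+1)^{(p-1)/4+1}}$, i.e.\ $F_l(-1)=F_m(-1)$. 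Ruling this out is exactly where \eqref{eq8} enters: via the substitution $x\mapsto -x$ used in the proof of Lemma \ref{l3}, this congruence would give $L_1(v)\le 3(p-1)/4$, contradicting $L_1(v)=3(p-1)/4+1$. Equivalently and most briefly: $a\neq 0$ forces $m_0=0$ and $m_1\le p-L_1(v)=(p-1)/4$, while $a=0$ gives $m_0+m_1=(p-1)/4$ by \cite{EP}. Without some form of this final non-vanishing input your argument does not close.
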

\begin{proof} Since $L_1(u)\leq L(u)$ and $ L(u)=(7p+1)/4$ \cite{EP}, it
follows that $L_1(u)\leq (7p+1)/4$. Assume that $L_1(u)< L(u)$. Then
there exists $f(x)=ax^b, a\neq 0$ such that $S_u(x) +ax^b \equiv 0
\pmod {(x-1)^{m_0}(x+1)^{m_1}}$ for $m_0+m_1 >(p-1)/4$. By
\eqref{eq5} the last comparison is impossible for $p \neq 5$.
\end{proof}
If $p=5$ then $L_1(u)=8$.
\begin{lemma}
\label{l3}
 Let  $\{u_i\}, \{q_i\}, \{v_i\}$ be defined by \eqref{eq1} and \eqref{eq1nn}, respectively. Then $ L_k(q)+L_k(v)\leq L_k(u)$.
 \end{lemma}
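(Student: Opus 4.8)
The plan is to deduce the bound directly from \eqref{eq3}. For the period-$p$ sequences $\{q_i\}$ and $\{v_i\}$ over $\mathbb{F}_p$, write $M_q$ for the largest multiplicity of the root $x=1$ attainable in $S_q(x)+f_1(x)$ over all $f_1$ with $w(f_1)\le k$, and $M_v$ analogously for $v$; the reasoning that produced \eqref{eq3}, now with $x^p-1=(x-1)^p$, gives $M_q=p-L_k(q)$ and $M_v=p-L_k(v)$. By \eqref{eq3} it then suffices to prove that whenever $w(f)\le k$ and $S_u(x)+f(x)\equiv 0 \pmod{(x-1)^{m_0}(x+1)^{m_1}}$ with $0\le m_0,m_1\le p$, one has $m_0\le M_q$ and $m_1\le M_v$; adding these yields $m_0+m_1\le 2p-L_k(q)-L_k(v)$, which is the claim.

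The argument rests on two congruences modulo $(x-1)^p$. The first, $S_u(x)\equiv S_q(x)\pmod{(x-1)^p}$, is immediate from the first lines of \eqref{eq5} and \eqref{eq6}. The second, and the crux, is $S_u(-x)\equiv S_v(x)\pmod{(x-1)^p}$. To obtain it I will substitute $x\mapsto -x$ in \eqref{eq4} and reduce modulo $x^p-1$, using two structural facts: every element of $H_n$ is odd (since $g$ is odd, each $g^{\,n+4s}\bmod 2p$ is odd), so that $S_n(-x)=-S_n(x)$; and reduction modulo $x^p-1$ carries $S_n(x)$ to the period-$p$ indicator $\sum_{d\in D_n}x^d$. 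Because $-x^p\equiv -1$ and $-x^p+1\equiv 0\pmod{x^p-1}$, formula \eqref{eq4} collapses to $S_u(-x)\equiv \sum_{d\in D_m}x^d-\sum_{d\in D_l}x^d+1\pmod{x^p-1}$, and the right-hand side is exactly $S_v(x)$.

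With these in hand the two bounds follow in parallel. From $S_u+f\equiv 0\pmod{(x-1)^{m_0}}$ I reduce modulo $x^p-1$; this is legitimate since $m_0\le p$ forces $(x-1)^{m_0}\mid x^p-1$, so the reduced polynomial retains the factor $(x-1)^{m_0}$. Folding $f$ modulo $x^p-1$ only merges monomials and hence keeps weight $\le k$, so the first congruence turns the relation into $S_q(x)+\bigl(f(x)\bmod(x^p-1)\bigr)\equiv 0\pmod{(x-1)^{m_0}}$, whence $m_0\le M_q$. For $m_1$, I apply $x\mapsto -x$ to $S_u+f\equiv 0\pmod{(x+1)^{m_1}}$; as $(x+1)^{m_1}$ becomes $(-1)^{m_1}(x-1)^{m_1}$, this gives $S_u(-x)+f(-x)\equiv 0\pmod{(x-1)^{m_1}}$, where $f(-x)$ has the same weight as $f$. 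Reducing modulo $x^p-1$ and using the second congruence produces $S_v(x)+\bigl(f(-x)\bmod(x^p-1)\bigr)\equiv 0\pmod{(x-1)^{m_1}}$ with error weight $\le k$, so $m_1\le M_v$. Summing the two bounds finishes the proof.

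The main obstacle is the second congruence $S_u(-x)\equiv S_v(x)\pmod{(x-1)^p}$: it is what converts the behaviour of $u$ at the root $x=-1$ into a statement about $v$ at the root $x=1$, so that the definition of $L_k(v)$ (framed through the multiplicity at $1$, since $x^p-1=(x-1)^p$) becomes applicable after the substitution $x\mapsto -x$. The only genuine arithmetic input there is the oddness of every element of $H_n$, giving $S_n(-x)=-S_n(x)$; granting that, the remaining points — weight preservation under folding and the persistence of the factor $(x\mp1)^{m}$ under reduction modulo $x^p\mp1$ for $m\le p$ — are routine bookkeeping.
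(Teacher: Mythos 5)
Your proposal is correct and follows essentially the same route as the paper: split the divisibility condition at the two roots $x=\pm1$, identify the reduction of $S_u$ modulo $(x-1)^p$ with $S_q$, and handle the root $-1$ by the substitution $x\mapsto -x$ (using that all elements of $H_n$ are odd) to land on $S_v$, giving $m_0\le p-L_k(q)$ and $m_1\le p-L_k(v)$. The only difference is presentational: you re-derive the key congruences directly from \eqref{eq4} and spell out the weight-preserving folding of $f$ modulo $x^p-1$, whereas the paper simply cites \eqref{eq5} and \eqref{eq6}.
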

\begin{proof} Suppose $S_u(x)+f(x)\equiv 0 \left(\bmod (x-1)^{m_0}(x+1)^{m_1}
\right )$, $w(f)\leq k$ and $m_0+m_1=2p-L_k(u)$. Combining this with
\eqref{eq5} and \eqref{eq6}  we get $S_{q}(x)+f(x) \equiv 0
 \bigl(\bmod {(x-1)^{m_0}} \bigr )$ and $S_{l}(x)-S_{m}(x)+1+f(x) \equiv 0
 \bigl(\bmod (x+1)^{m_1} \bigr )$ or $S_{m}(x)-S_{l}(x)+1+f(-x) \equiv 0
 \bigl(\bmod (x-1)^{m_1} \bigr )$ Hence $m_0\leq p-L_k(q)$ and $m_1\leq
 p-L_k(v)$. This completes the proof of Lemma \ref{l3}.
 \end{proof}

\begin{lemma}
\label{l4}
 Let  $\{u_i\}$ be defined by \eqref{eq1} and $k\geq 2$. Then $  L_k(u)\leq 3(p-1)/4+1+ L_{k-2}(q)$.
 \end{lemma}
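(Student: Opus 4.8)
The plan is to use \eqref{eq3} and produce a single polynomial $f$ with $w(f)\le k$ for which $S_u(x)+f(x)$ is divisible by $(x-1)^{m_0}(x+1)^{m_1}$ with $m_0+m_1\ge (p-1)/4+\bigl(p-L_{k-2}(q)\bigr)$; since $2p-(p-1)/4-\bigl(p-L_{k-2}(q)\bigr)=3(p-1)/4+1+L_{k-2}(q)$, this gives the lemma. I will aim for $m_0=M:=p-L_{k-2}(q)$ on the $(x-1)$-side and $m_1=(p-1)/4$ on the $(x+1)$-side.

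First I assemble the ingredients. By the definition of $L_{k-2}(q)$ there is $g\in\mathbb{F}_p[x]$, $\deg g<p$, with $w(g)\le k-2$ and $S_q(x)+g(x)\equiv 0 \pmod{(x-1)^{M}}$. By \eqref{eq7} we have $L_{k-2}(q)\le L_0(q)=3(p-1)/4+1$, so $M\ge (p-1)/4$; moreover \eqref{eq6} together with Lemma \ref{l1}(1) (each $S_n\equiv -1/4$ modulo $(x-1)^{(p-1)/4}$) gives $S_q\equiv 0 \pmod{(x-1)^{(p-1)/4}}$, whence $g\equiv -S_q\equiv 0 \pmod{(x-1)^{(p-1)/4}}$. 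Dually, \eqref{eq5} and Lemma \ref{l1}(2) yield $S_u\equiv S_l-S_m+1\equiv 1 \pmod{(x+1)^{(p-1)/4}}$.

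Next I carry out the construction. I embed $g$ at even exponents: each monomial $ax^{t}$ of $g$ is kept as $ax^{t}$ if $t$ is even and replaced by $ax^{t+p}$ if $t$ is odd, giving $\hat g$ with $w(\hat g)=w(g)$. Since $x^p-1=(x-1)^p$ and $x^p+1=(x+1)^p$ over $\mathbb{F}_p$, we have $x^p\equiv 1\pmod{(x-1)^p}$ and $x^p\equiv -1\pmod{(x+1)^p}$, and a term-by-term check shows $\hat g\equiv g(x)\pmod{(x-1)^p}$ and $\hat g\equiv g(-x)\pmod{(x+1)^p}$. Putting $h(x)=\tfrac12(x^{p}-1)$, so that $w(h)=2$, $h\equiv 0\pmod{(x-1)^p}$ and $h\equiv -1\pmod{(x+1)^p}$, I set $f=\hat g+h$, so $w(f)\le (k-2)+2=k$. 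On the $(x-1)$-side, \eqref{eq5}--\eqref{eq6} give $S_u\equiv S_q\pmod{(x-1)^p}$, hence $S_u+f\equiv S_q+g\equiv 0\pmod{(x-1)^{M}}$, i.e. $m_0=M$. On the $(x+1)$-side, $S_u+f\equiv 1+g(-x)-1=g(-x)\pmod{(x+1)^{(p-1)/4}}$; substituting $x\mapsto -x$ in $g\equiv 0\pmod{(x-1)^{(p-1)/4}}$ shows $g(-x)\equiv 0\pmod{(x+1)^{(p-1)/4}}$, so $S_u+f\equiv 0\pmod{(x+1)^{(p-1)/4}}$, i.e. $m_1=(p-1)/4$. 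Then $m_0+m_1=M+(p-1)/4$, which is exactly what is needed.

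The main obstacle is the coupling of the two cyclotomic factors: the $k-2$ errors spent to force divisibility by $(x-1)^{M}$ unavoidably act on the $(x+1)$-side as well and, a priori, would destroy the divisibility by $(x+1)^{(p-1)/4}$ that the two remaining errors in $h$ are meant to supply. The device defeating this is the even-exponent embedding, which makes the $(x+1)$-side footprint of $\hat g$ equal to the reflection $g(-x)$, combined with the cheaply available fact that $g\equiv 0\pmod{(x-1)^{(p-1)/4}}$ because $S_q$ already is; together these force the stray contribution $g(-x)$ to vanish modulo $(x+1)^{(p-1)/4}$. The one place that demands care is checking that the even/odd exponent bookkeeping produces precisely the sign pattern identifying $\hat g$ with $g(-x)$ on the $(x+1)$-side.
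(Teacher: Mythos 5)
Your proposal is correct and follows essentially the same route as the paper: take an optimal error polynomial for $q$, lift it to even exponents so that it simultaneously respects the $(x-1)$-side divisibility and vanishes modulo $(x+1)^{(p-1)/4}$, and spend the remaining two errors on $(x^p-1)/2$ to kill the constant $1$ on the $(x+1)$-side. The only cosmetic difference is that the paper argues via $f(x)=f(-x)$ where you track $\hat g\equiv g(-x)\pmod{(x+1)^p}$, which is the same observation.
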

\begin{proof} From our definition it follows that there exists $h(x)$ such that $S_q(x)+h(x)\equiv 0 \pmod {(x-1)^{p-L_{k-2}(q)}}$,
 $w(h)\leq k-2$. Then, by Lemma \ref{l1} $h(x)\equiv 0 \pmod
{(x-1)^{(p-1)/4}}$. Let $h(x)=\sum h_i x^{a_i}$. We consider $f(x)=
\sum f_i x^{b_i}$ where

\noindent
 $
 b_i =\begin{cases}
 a_i,&\text{ if  }  a_i \text{ is an even}, \\
  a_i+p,&\text{ if  }  a_i \text{ is an odd}.
 \end{cases}
$

By definition $f(x)\equiv h(x)\pmod {(x-1)^{p} }$, hence
$S_q(x)+f(x)\equiv 0 \pmod {(x-1)^{p-L_{k-2}(q)}}$. Further, since
$h(x)\equiv 0 \pmod {(x-1)^{(p-1)/4}}$ and $f(x)=f(-x)$, it follows
that $f(x)\equiv 0 \pmod {(x+1)^{(p-1)/4} }$.

Using \eqref{eq4}, we obtain that $S_u(x)+(x^p-1)/2+f(x)\equiv
(x^p-1)\bigl(S_j(x)+S_m(x)+1/2\bigr) +S_q(x)+f(x)\pmod {(x^2-1)^{p}
}.$ From this by Lemma \ref{l1} we can establish that
$S_u(x)+(x^p-1)/2+f(x)\equiv 0 \pmod
{(x-1)^{p-L_{k-2}(q)}(x+1)^{(p-1)/4} }.$  The conclusion of this
lemma then follows from \eqref{eq3}.
 \end{proof}
\begin{theorem}
\label{t6}Let  $\{u_i\}$ be defined by \eqref{eq1} and
 $2\leq k < (p-1)/4 $. Then $L_k(u)=3(p-1)/2+2$.
\end{theorem}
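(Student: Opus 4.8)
The plan is to pin down $L_k(u)$ by a squeeze: I will produce a lower bound and an upper bound that both equal $3(p-1)/2+2$. Both inequalities are already packaged in Lemmas~\ref{l3} and~\ref{l4}, so the entire argument reduces to substituting the known complexities $L_k(q)$ and $L_k(v)$ from \eqref{eq7} and \eqref{eq8}, evaluated on the index ranges dictated by the hypothesis $2\leq k<(p-1)/4$.

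For the lower bound I would invoke Lemma~\ref{l3}, which gives $L_k(u)\geq L_k(q)+L_k(v)$. Since $2\leq k<(p-1)/4$, the index $k$ falls in the plateau $0\leq k\leq(p-1)/4$ of \eqref{eq7}, so $L_k(q)=3(p-1)/4+1$; it also falls in the range $1\leq k<(p-1)/4$ of \eqref{eq8}, so $L_k(v)=3(p-1)/4+1$. Adding these yields $L_k(u)\geq 2\bigl(3(p-1)/4+1\bigr)=3(p-1)/2+2$.

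For the upper bound I would apply Lemma~\ref{l4}, valid since $k\geq 2$, to get $L_k(u)\leq 3(p-1)/4+1+L_{k-2}(q)$. Because $2\leq k<(p-1)/4$, the shifted index satisfies $0\leq k-2\leq(p-1)/4$, so \eqref{eq7} again gives $L_{k-2}(q)=3(p-1)/4+1$. Substituting produces $L_k(u)\leq 3(p-1)/4+1+3(p-1)/4+1=3(p-1)/2+2$, matching the lower bound and forcing equality.

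The argument is essentially bookkeeping once Lemmas~\ref{l3} and~\ref{l4} are available; the only point requiring care is checking that \emph{both} the original index $k$ and the shifted index $k-2$ remain inside the constant-valued plateaus of \eqref{eq7} and \eqref{eq8}, which is precisely where the two-sided hypothesis $2\leq k<(p-1)/4$ is consumed. I therefore expect no genuine obstacle at this stage: the substantive work sits upstream, in the construction of the even/odd perturbation $f$ inside Lemma~\ref{l4} (matching a minimizer for the sequence $q$ while keeping the weight and the divisibility by $(x+1)^{(p-1)/4}$ under control) and in the derivation of the complexity formulas \eqref{eq7} and \eqref{eq8}.
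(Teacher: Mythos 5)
Your proposal is correct and matches the paper's own proof essentially verbatim: both arguments squeeze $L_k(u)$ between the lower bound of Lemma~\ref{l3} and the upper bound of Lemma~\ref{l4}, then read off $L_k(v)=L_k(q)=L_{k-2}(q)=3(p-1)/4+1$ from \eqref{eq7} and \eqref{eq8}. In fact you cite the lemmas more accurately than the paper does (its proof refers to Lemmas~\ref{l2} and~\ref{l3}, evidently a slip for Lemmas~\ref{l3} and~\ref{l4}).
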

\begin{proof}
By Lemmas \ref{l2} and  \ref{l3} it follows that $L_k(v) +L_k(q)
\leq L_k(u)\leq 3(p-1)/4+1+L_{k-2}(q)$. To conclude the proof, it
remains to note that  $L_k(v) =L_k(q) =L_{k-2}(q)=3(p-1)/4+1$ for
$2\leq k < (p-1)/4 $ by \eqref{eq7},\eqref{eq8}.
 \end{proof}

\section{The estimates of $k$-error
linear complexity }  \label{sec:3}
 In this section we determine the
exact values of the $k$-error linear complexity of $u$ for
$(p-1)/4+2 \leq k <(p-1)/3$ and we obtain the estimates for the
other  values of $k$. Farther, we consider two cases.
\subsection{Let $(m,j,l)=(0,1,3), (0,2,3), (1,2,0)$, $(1,3,0)$}
\label{s1}
\begin{lemma}
\label{l7} Let  $\{u_i\}$ be defined by \eqref{eq1}. Then
$21(p-1)/16+1\leq L_{(p-1)/4}(u)\leq 3(p-1)/2+2$ and  $p+1\leq
L_{(p-1)/4+1}(u)\leq 3(p-1)/2+2$ for $p>5$.
\end{lemma}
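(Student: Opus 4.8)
The four inequalities split naturally into two upper bounds and two lower bounds, and in each direction the plan is to invoke the structural results already established rather than to re-enter the polynomial arithmetic. For the upper bounds I would apply Lemma~\ref{l4}, which gives $L_k(u)\le 3(p-1)/4+1+L_{k-2}(q)$ for every $k\ge 2$. For the lower bounds I would apply Lemma~\ref{l3}, which gives $L_k(q)+L_k(v)\le L_k(u)$. The only remaining work is to substitute the correct values of $L_{k-2}(q)$, $L_k(q)$ and $L_k(v)$ read off from \eqref{eq7} and \eqref{eq8}, so the heart of the argument is bookkeeping with the ranges in those piecewise formulas.

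For the upper bounds I would take $k=(p-1)/4$ and $k=(p-1)/4+1$ in turn. In both cases the shifted index satisfies $0\le k-2\le (p-1)/4$ (using $p>5$, hence $(p-1)/4\ge 2$), so the first line of \eqref{eq7} applies and $L_{k-2}(q)=3(p-1)/4+1$. Feeding this into Lemma~\ref{l4} yields $L_k(u)\le 3(p-1)/4+1+3(p-1)/4+1=3(p-1)/2+2$ for both values of $k$, which is exactly the claimed upper bound.

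For the lower bounds I would treat the two values of $k$ separately, since they land in different pieces of \eqref{eq7} and \eqref{eq8}. When $k=(p-1)/4+1$, both $L_k(q)$ and $L_k(v)$ fall in the range $(p-1)/4+1\le k<(p-1)/3$, where \eqref{eq7} and \eqref{eq8} each give $(p-1)/2+1$; Lemma~\ref{l3} then gives $L_k(u)\ge (p-1)/2+1+(p-1)/2+1=p+1$. When $k=(p-1)/4$ the value $L_k(q)=3(p-1)/4+1$ is still exact by the first line of \eqref{eq7}, but for $v$ the index sits precisely at the transition of \eqref{eq8}, where only the estimate $L_{(p-1)/4}(v)\ge 9(p-1)/16$ is available; this is the only point at which the hypothesis $|m-l|\neq 2$ defining the present subsection plays a role, entering through that $v$-estimate of \eqref{eq8} (whose justification, via part~3 of Lemma~\ref{l1}, requires $|m-l|\neq 2$). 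Combining $L_{(p-1)/4}(q)+L_{(p-1)/4}(v)\ge 3(p-1)/4+1+9(p-1)/16=21(p-1)/16+1$ via Lemma~\ref{l3} gives the stated bound.

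The main obstacle is not any single hard step but rather keeping the interval bookkeeping exact: one must verify that $k-2$, $k$ for $q$, and $k$ for $v$ each land in the intended piece of the relevant piecewise formula for both target values of $k$, and in particular that at $k=(p-1)/4$ only the weaker inequality for $L_{(p-1)/4}(v)$ is at hand. Because that transitional value of $v$ is known only up to the estimate $9(p-1)/16\le L_{(p-1)/4}(v)\le 3(p-1)/4+1$, I would not expect the resulting lower bound $21(p-1)/16+1$ to be tight, which is precisely why the conclusion is phrased as a two-sided estimate rather than an exact value.
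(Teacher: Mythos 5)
Your proposal is correct and follows exactly the route the paper takes: the paper's entire proof is the one-line remark that the lemma follows from Lemmas \ref{l3}, \ref{l4} and \eqref{eq7}, \eqref{eq8}, and your bookkeeping (first line of \eqref{eq7} for $L_{k-2}(q)$ in the upper bounds, the exact values at $k=(p-1)/4+1$ and the weaker estimate $L_{(p-1)/4}(v)\ge 9(p-1)/16$ at the transitional index for the lower bounds) fills in precisely the intended details. The only quibble is your aside tying the $9(p-1)/16$ estimate to the hypothesis $|m-l|\neq 2$ of this subsection: the paper states that estimate for $v$ in general and reuses the same bound $21(p-1)/16+1$ in subsection \ref{s2}, so that remark is inessential to (and slightly at odds with) the paper's presentation, though it does not affect the validity of your argument.
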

The statement of this lemma  follows from Lemmas \ref{l3}, \ref{l4}
and \eqref{eq7}, \eqref{eq8}.
\begin{theorem}
\label{t8}Let  $\{u_i\}$ be defined by \eqref{eq1} for
$(m,j,l)=(0,1,3), (0,2,3), (1,2,0)$, $(1,3,0)$ and  $(p-1)/4+2\leq k
< (p-1)/3 $.  Then  $L_k(u)=p+1$.
\end{theorem}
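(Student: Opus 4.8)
The plan is to prove $L_k(u)=p+1$ by matching an easy lower bound with a delicate low-weight construction. For the lower bound I would simply invoke Lemma~\ref{l3}: since $(p-1)/4+2\le k<(p-1)/3$ forces $k$ into the range $[(p-1)/4+1,(p-1)/3)$, equations \eqref{eq7} and \eqref{eq8} give $L_k(q)=L_k(v)=(p-1)/2+1$, so $L_k(u)\ge L_k(q)+L_k(v)=p+1$.

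The substance is the upper bound, i.e.\ by \eqref{eq3} producing $f$ with $w(f)\le k$ and $S_u(x)+f(x)\equiv 0\pmod{(x-1)^{(p-1)/2}(x+1)^{(p-1)/2}}$, so that $m_0+m_1=p-1$. First I would split this two-sided divisibility into two conditions at $x=1$. By \eqref{eq5} and \eqref{eq6}, $S_u\equiv S_q\pmod{(x-1)^p}$; on the other side, substituting $x\to -x$ and using that by the choice of $g$ every $H_n$ consists of odd residues modulo $2p$, so $S_n(-x)=-S_n(x)$, the $(x+1)$-condition turns into $S_v(x)+f(-x)\equiv 0\pmod{(x-1)^{(p-1)/2}}$. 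Writing $f=f_e+f_o$ for its even and odd parts, so $w(f)=w(f_e)+w(f_o)$ and $f(-x)=f_e-f_o$, and adding and subtracting the two congruences, the task decouples into cancelling $A:=1+\tilde S_m+\tilde S_j$ by an even $f_e$ and cancelling $B:=\tilde S_l+\tilde S_j$ by an odd $f_o$, both modulo $(x-1)^{(p-1)/2}$, where $\tilde S_n:=\sum_{i\in D_n}x^i$. The parity restriction is free: since $x^p\equiv 1\pmod{(x-1)^p}$, replacing an exponent $a$ by $a+p$ flips its parity without altering the weight or the residue modulo $(x-1)^{(p-1)/2}$.

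I would then exploit the four admissible triples. In each case exactly one of the pairs $\{m,j\}$, $\{l,j\}$ equals $\{0,2\}$ or $\{1,3\}$, so one of $A,B$ is, up to an additive constant, the generating polynomial of the quadratic residues or non-residues. For such a Legendre-type sum the power sums $\sum_{i\in\mathrm{QR}}i^{t}$ vanish for $1\le t\le (p-3)/2$, hence it is congruent to a constant modulo $(x-1)^{(p-1)/2}$ and can be cancelled with weight $1$: by a constant when it sits on the even side, or by $\tfrac12 x^{p}$ when it sits on the odd side. The remaining mixed sum then differs from $S_q$ (respectively $S_v$) only by that constant, and since the minimal weight to cancel $S_q$ or $S_v$ to order $(p-1)/2$ is $(p-1)/4+1$ by \eqref{eq7} and \eqref{eq8}, I would argue its minimal canceller again has weight $(p-1)/4+1$. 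This yields $w(f)\le 1+(p-1)/4+1=(p-1)/4+2\le k$, hence $m_0=m_1=(p-1)/2$ and $L_k(u)\le p+1$.

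The hard part will be the final step: showing that the constant shift separating the mixed sum from $S_q$ (or $S_v$) is absorbed, so the mixed side costs exactly $(p-1)/4+1$ rather than $(p-1)/4+2$; an extra unit here would raise the threshold to $(p-1)/4+3$ and break the theorem. I expect this to follow from the explicit cyclotomic error polynomial of \cite{AMW} combined with Lemmas~\ref{l1} and \ref{lnov}: $S_q$ vanishes to order $(p-1)/4$ with no errors, Lemma~\ref{lnov} propagates vanishing from order $(p-1)/4$ up to order $(p-1)/2$, and the minimal canceller can be chosen with constant term matching the Legendre constant, so subtracting that constant does not increase the weight. Making this absorption rigorous is exactly where the hypothesis $|l-m|\neq 2$ and the quartic-class structure must be used.
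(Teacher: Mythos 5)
Your lower bound is exactly the paper's: Lemma~\ref{l3} plus \eqref{eq7}, \eqref{eq8} give $L_k(u)\ge L_k(q)+L_k(v)=p+1$ on the whole range. Your reduction of the upper bound is also sound and in fact reconstructs the shape of the paper's error polynomial: the substitution $x\mapsto -x$ together with the even/odd split of $f$ correctly decouples the problem into cancelling $1+\tilde S_m+\tilde S_j$ by an even polynomial and $\tilde S_l+\tilde S_j$ by an odd one, exactly one of the latter two index pairs is $\{0,2\}$ or $\{1,3\}$ in each admissible triple, and the resulting Legendre-type sum is indeed constant modulo $(x-1)^{(p-1)/2}$ and is killed by a single monomial. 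The paper's explicit choice for $(m,j,l)=(0,1,3)$, namely $f(x)=x^p/2-(\rho+3)/4-(\rho+1)x^pS_0(x)$, decomposes precisely into your $f_o=x^p/2$ of weight $1$ and an even part of weight $(p-1)/4+1$.

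The genuine gap is that you never produce the even part, and that is the entire content of the theorem at the boundary value $k=(p-1)/4+2$: as you yourself note, if the mixed side costs $(p-1)/4+2$ instead of $(p-1)/4+1$ the stated range fails. Your appeal to \eqref{eq7} is not enough, because $1+\tilde S_m+\tilde S_j$ is not $S_q$ but $S_q$ plus a nonzero constant modulo $(x-1)^{(p-1)/2}$, and adding a constant to a minimal canceller can a priori raise its weight by one. The paper closes this by an explicit construction rather than by an absorption argument: it takes the canceller to be $c\,S_0(x)+d$ with $c=-(\rho+1)$ and $d=-(\rho+3)/4$, where $\rho=\theta^{(p-1)/4}$ satisfies $\rho^2=-1$. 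The constant $d$ kills the value at $x=1$; the derivatives of order $1,\dots,(p-1)/4-1$ vanish automatically by Lemma~\ref{l1}; the order-$(p-1)/4$ derivative vanishes because the contribution of $S_n$ there is proportional to $\rho^n$, so the choice of $c$ reduces it to $\rho+\rho^3=0$; and Lemma~\ref{lnov} then propagates the vanishing up to order $(p-1)/2$. Since $S_0(x)$ has no constant term, $c\,S_0(x)+d$ has weight exactly $(p-1)/4+1$, which is the fact you were hoping to extract from \cite{AMW}. Without exhibiting this polynomial (or an equivalent weight-count for the minimal canceller of $S_q$ plus a constant), your argument only yields $L_k(u)=p+1$ for $k\ge (p-1)/4+3$, not for $k=(p-1)/4+2$.
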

\begin{proof}
We consider the case when $(m,j,l)=(0,1,3)$. Let
$f(x)=x^p/2-(\rho+3)/4 -(\rho+1)x^pS_0(x)$ where
$\rho=\theta^{(p-1)/4}$ is a primitive $4$-th root of unity modulo
$p$. Then $w(f)=2+(p-1)/4$. Denote $S_u(x)+f(x)$ by $h(x)$. Under
the conditions of this theorem we have
\begin{equation}
\label{eq9} h(x)=(x^p+1)S_1(x)+x^pS_0(x)+ S_3(x)+1 +x^p/2-(\rho+3)/4
-(\rho+1)x^pS_0(x).\end{equation} Hence $h(1)=0$. Let $h^{(n)}(x)$
be a formal derivative of order $n$ of the polynomial $h(x)$. By
Lemmas 2 and 3 from \cite{EP} we have that $h^{(n)}(1)=0$ if $1\leq
n <(p-1)/4$ and by Lemma 3 from \cite{EP}  $h^{(p-1)/4}(1)=\left (
2\rho+1 +\rho^3 -(\rho+1) \right )(p-1)/4=0.$ Hence, by Lemma
\ref{lnov} $h^{(n)}(1)=0$ if $(p-1)/4 < n <(p-1)/2$ and $h(x)\equiv
0 \pmod{(x-1)^{(p-1)/2}}.$

Further, $h(-1)=-1/4+1/4+1-1/2 -(\rho+3)/4+(\rho+1)/4=0$ and
$h^{(p-1)/4}(-1)=\left (-1 +\rho^3 +(\rho+1)\right )(p-1)/4=0.$ So,
by Lemma \ref{lnov} $h^{(n)}(1)=0$ if $1 < n <(p-1)/2$ and
$h(x)\equiv 0\pmod{(x+1)^{(p-1)/2}}.$ Therefore, by \eqref{eq3} we
see that $L_{(p-1)/4+2} \leq p+1$. On the other hand, by Lemma
\ref{l3} $L_k(u)\geq L_k(v)+L_k(q)$. To conclude the proof, it
remains to note that $L_k(v)+L_k(q)= p+1$ for $(p-1)/4 +2< k
<(p-1)/3$ by \eqref{eq7}, \eqref{eq8}.
 The other cases may be considered similarly.
\end{proof}
Farther, if $(p-1)/3 \leq k <(p-1)/2$ then by Lemmas \ref{l3},
Theorem \ref{t8} and \eqref{eq7}, \eqref{eq8} we have that
$(p-1)/2+1 \leq L_k(u)\leq p+1$. It is simple to prove that $
L_{(p-1)/2+2}(u)\leq (p-1)/2+2.$

\subsection{Let $(m,j,l)=(0,1,2), (0,3,2), (1,0,3), (1,2,3)$}
\label{s2} Similarly as in subsection \ref{s1}, we have that
$21(p-1)/16+1\leq L_{(p-1)/4}(u)\leq 3(p-1)/2+2$.
\begin{theorem} \label{t9} Let  $\{u_i\}$ be defined
by \eqref{eq1} for $(m,j,l)=(0,1,2), (0,3,2), (1,0,3),$ $(1,2,3)$
and $(p-1)/4+1\leq k < (p-1)/3 $ then $L_k(u)=5(p-1)/4+2.$
\end{theorem}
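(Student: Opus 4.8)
The plan is to establish matching bounds. By \eqref{eq3}, proving $L_k(u)=5(p-1)/4+2$ amounts to showing that the maximum of $m_0+m_1$, taken over all $f$ with $w(f)\le k$ and $S_u(x)+f(x)\equiv 0 \pmod{(x-1)^{m_0}(x+1)^{m_1}}$, is exactly $3(p-1)/4$. Since $L_k(u)$ is non-increasing in $k$, it suffices to prove the lower bound $L_k(u)\ge 5(p-1)/4+2$ for the largest $k$ in the range (just below $(p-1)/3$) and the upper bound $L_k(u)\le 5(p-1)/4+2$ for the smallest, $k=(p-1)/4+1$.

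For the lower bound I would first bound the two exponents separately. Exactly as in Lemma \ref{l3}, any admissible $f$ yields $S_q(x)+f(x)\equiv 0 \pmod{(x-1)^{m_0}}$ and $S_v(x)+f(-x)\equiv 0 \pmod{(x-1)^{m_1}}$, and since $L_k(q)=L_k(v)=(p-1)/2+1$ in this range by \eqref{eq7} and \eqref{eq8}, this forces $m_0\le (p-1)/2$ and $m_1\le (p-1)/2$. The crux is to improve this to $\min(m_0,m_1)\le (p-1)/4$. Writing $f=f_e+f_o$ for the even- and odd-exponent parts, so that $f(-x)=f_e(x)-f_o(x)$, and setting $\mu=\min(m_0,m_1)$, I reduce both congruences modulo $(x-1)^\mu$ and take their sum and difference. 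Using $S_q=2S_j+S_m+S_l+1$ and $S_v=S_m-S_l+1$ from \eqref{eq6}, this gives $S_j(x)+S_m(x)+(1+f_e(x))\equiv 0$ and $S_j(x)+S_l(x)+f_o(x)\equiv 0 \pmod{(x-1)^\mu}$. The key structural point is that the defining condition $|l-m|=2$ of this subsection, together with the distinctness of $m,j,l$, forces $l=m+2$ and hence both $|j-m|\neq 2$ and $|j-l|\neq 2$. Consequently, if $\mu\ge (p-1)/4+1$, then Lemma \ref{l1}(3) applies to \emph{both} relations and yields $w(1+f_e)\ge (p-1)/4$ and $w(f_o)\ge (p-1)/4$, whence $w(f)=w(f_e)+w(f_o)\ge (p-1)/2-1$, contradicting $w(f)\le k<(p-1)/3$ for $p>5$. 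Therefore $\min(m_0,m_1)\le (p-1)/4$, and combining with $\max(m_0,m_1)\le (p-1)/2$ gives $m_0+m_1\le 3(p-1)/4$.

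For the upper bound I would construct, in the spirit of the proof of Theorem \ref{t8}, an explicit polynomial $f$ of weight $(p-1)/4+1$ built from $x^pS_0(x)$ together with a constant (and, if needed, one monomial), with coefficients fixed so that $h(x)=S_u(x)+f(x)$ satisfies $h^{(n)}(1)=0$ for $0\le n\le (p-1)/4$ and $h^{(n)}(-1)=0$ for $0\le n\le (p-1)/4-1$. Part (1) of Lemma \ref{l1} supplies the vanishing at $x=1$ through order $(p-1)/4$, while part (2) supplies the vanishing at $x=-1$ through order $(p-1)/4-1$ once the constant is chosen so that $S_l(x)-S_m(x)+1+f(x)\equiv 0 \pmod{(x+1)^{(p-1)/4}}$. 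Applying Lemma \ref{lnov} at $x=1$ then upgrades the vanishing there to all orders below $(p-1)/2$, so that $h(x)\equiv 0 \pmod{(x-1)^{(p-1)/2}(x+1)^{(p-1)/4}}$; at $x=-1$ the order-$(p-1)/4$ derivative does not vanish, so no upgrade occurs and the exponent stops at $(p-1)/4$. By \eqref{eq3} this gives $m_0+m_1\ge 3(p-1)/4$ and hence $L_{(p-1)/4+1}(u)\le 5(p-1)/4+2$. As in Theorem \ref{t8}, the remaining triples are treated in the same manner.

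The main obstacle is the lower bound, and specifically the passage from the per-factor estimates $m_0,m_1\le (p-1)/2$ (which only reproduce the weak bound of Lemma \ref{l3}) to the sharp $\min(m_0,m_1)\le (p-1)/4$. This is precisely the step that separates the present case from subsection \ref{s1}: here $|l-m|=2$ guarantees that Lemma \ref{l1}(3) applies to both the sum and the difference of the two reduced congruences, whereas in subsection \ref{s1} one of these combinations falls under the excluded configuration (an index difference equal to $2$) and the corresponding weight estimate is lost, which is why the value there collapses to $p+1$. A minor point to handle with care is the lone additive constant when converting $w(1+f_e)\ge (p-1)/4$ into $w(f_e)\ge (p-1)/4-1$, but this costs only the single unit already accounted for in the weight count above.
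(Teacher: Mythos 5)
Your lower bound is, in substance, the paper's own argument: the per\--factor estimates $m_0,m_1\le (p-1)/2$ reproduce the paper's case $\min(m_0,m_1)\le(p-1)/4$, and your sum/difference of the two reduced congruences after splitting $f$ into even and odd parts is exactly the paper's decomposition $f(x)=f_0(x^2)+xf_1(x^2)$, leading to $S_j+S_m+1+f_0(x^2)\equiv 0$ and $S_j+S_l+xf_1(x^2)\equiv 0 \pmod{(x-1)^{\min(m_0,m_1)}}$ and to a double application of Lemma \ref{l1}(3) via $|j-m|\neq 2$ and $|j-l|\neq 2$. You correctly identify this as the step that separates the two subsections, and you are even a little more careful than the paper about the unit lost in passing from $w(1+f_e)$ to $w(f_e)$. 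That half is fine.

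The gap is in the upper bound. Your ansatz $f=c+ax^pS_m(x)$ has two free coefficients but must meet three linear conditions to realize the profile you want: $h(1)=0$, $h^{(p-1)/4}(1)=0$ (so that Lemma \ref{lnov} upgrades the order at $x=1$ to $(p-1)/2$), and $h(-1)=0$. For $(m,j,l)=(0,1,2)$ these are inconsistent: modulo $(x-1)^p$ one has $h\equiv 2S_1+(1+a)S_0+S_2+\mathrm{const}$, so the derivative condition forces $a=-2\rho$ and the value at $1$ forces $c=a/4$, after which the value at $x=-1$ of $-(1+a)S_0+S_2+1+c$ comes out nonzero (it equals $1$, resp.\ $1-\rho$, depending on the sign convention for $S_n(-1)$), so the multiplicity at $-1$ is $0$, not $(p-1)/4$, and you only reach $m_0+m_1\ge(p-1)/2$. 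Your escape hatch of one extra monomial does make the system solvable (take $bx^p$, which shifts the two values in opposite directions without disturbing derivatives), but it raises $w(f)$ to $(p-1)/4+2$ and so misses the endpoint $k=(p-1)/4+1$ claimed in the theorem; an extra monomial $bx^e$ with $e\not\equiv 0\pmod p$ would moreover take $h$ outside the class of polynomials to which Lemma \ref{lnov} applies. The paper's choice $f=-1/2-2S_l(x)$ aims for the opposite profile --- multiplicity $(p-1)/4$ at $x=1$ (only the value condition is needed there, the lower derivatives vanishing by Lemma \ref{l1}(1)) and the Lemma \ref{lnov} upgrade at $x=-1$ --- and there the three conditions are consistent precisely because $\rho^2=-1$ makes $h^{(p-1)/4}(-1)\propto -1+(1+a)\rho^2$ vanish for the same $a=-2$ already dictated by the two value conditions. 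So your strategy is right but the explicit perturbation must be chosen as in the paper (or with equal care) for the weight budget $(p-1)/4+1$ to close.
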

\begin{proof} We consider the case when $(m,j,l)=(0,1,2)$. Let here
$f(x)=-1/2-2S_2(x)$ and $h(x)=S_u(x)+f(x)$. Since $(m,j,l)=(0,1,2)$
it follows that
\begin{equation}
\label{eq10} h(x)=(x^p+1)S_1(x)+x^pS_0(x)+
S_2(x)+1-1/2-2S_2(x).\end{equation} Hence $h(1)=0$. By Lemma 2 from
\cite{EP} we have that $h^{(n)}(1)=0$ if $1\leq n <(p-1)/4$. Hence
$h(x)\equiv 0(\bmod~(x-1)^{(p-1)/4}).$

Further, $h(-1)=0$ and $h^{(p-1)/4}(-1)=\left (-1 +\rho^2 -2\rho^2
\right )(p-1)/4=0.$ So, $h^{(n)}(-1)=0$ if $1 < n <(p-1)/2$ and
$h(x)\equiv 0\pmod{(x+1)^{(p-1)/2}}.$ Therefore, by \eqref{eq3} we
see that $L_{(p-1)/4+2} \leq 2p-3(p-1)/4$.

Suppose  $L_{(p-1)/4+2} < 2p-3(p-1)/4$; then by \eqref{eq3} there
exist $m_0, m_1$ such that $m_0+m_1> 3(p-1)/4$ and $S_u(x)+f(x)
\equiv 0
 \pmod {(x-1)^{m_0} (x+1)^{m_1}}$, $w(f)\leq k< (p-1)/3$.

We consider two cases.

(i) Let $m_0\leq (p-1)/4$ or $m_1\leq (p-1)/4$. Then $m_1>(p-1)/2$
or $m_0> (p-1)/2$ and by \eqref{eq5} and \eqref{eq6} we obtain
$L_k(q) < (p+1)/2$ or $L_k(v) < (p+1)/2$. This is impossible for
$k<(p-1)/3$ by \eqref{eq7} or \eqref{eq8}.

(ii) Let $\min(m_0, m_1) > (p-1)/4$. We can write that
$f(x)=f_0(x^2)+xf_1(x^2).$ Therefore,  since
$2S_{1}(x)+S_{0}(x)+S_{2}(x)+1+f(x)\equiv 0 \pmod{ (x-1)^{m_0} }$
and $S_{2}(x)-S_{0}(x)+1+f(x)\equiv 0 \bigl(\bmod (x+1)^{m_1} \bigr
)$ or $-S_{2}(x)+S_{0}(x)+1+f_0(x^2)-xf_1(x^2)\equiv  0 \pmod{
(x-1)^{m_1} }$  we see that $S_{1}(x)+S_{0}(x)+1+f_0(x^2)\equiv 0
\pmod {(x-1)^{\min(m_0,m_1)} }$. Hence, $w(f_0)\geq (p-1)/4$ by
Lemma \ref{l1}.

Similarly, $-2S_{1}(x)-S_{0}(x)-S_{2}(x)+1+f_0(x^2)-xf_1(x^2)\equiv
0 \pmod {(x+1)^{m_1} })$ and
$S_{2}(x)-S_{0}(x)+1+f_0(x^2)+xf_1(x^2)\equiv 0 \bmod {(x+1)^{m_1}
}$ so $S_{1}(x)+S_{2}(x)+1+xf_1(x^2)\equiv 0 \pmod{
(x-1)^{\min(m_0,m_1)} }$. Hence, $w(f_1)\geq (p-1)/4$ by Lemma
\ref{l1}.
 This contradicts  the fact that $w(f)<(p-1)/3.$
\end{proof}
Similarly, if $(p-1)/3 \leq k <(p-1)/2$ then by Lemmas \ref{l3},
Theorem \ref{t8} and \eqref{eq7}, \eqref{eq8} we have that
$(p-1)/2+1 \leq L_k(u)\leq 2p-3(p-1)/4$. Here $ L_{(p-1)/2+2}(u)\leq
3(p-1)/4 +2.$

 In the conclusion of this section note that we can improve the estimate of Lemma \ref{l4} for $k\geq (p-1)/2+1$. With similar arguments as above we obtain the following results
  for $u$.
  \begin{lemma}
\label{l5}
 Let  $\{u_i\}$ be defined by \eqref{eq1} and $k=(p-1)/2+f, f\geq 0$.
 Then $  L_k(u)\leq L_{[f/2]}(v)+1$ where $[f/2]$ is the integral part of number $f/2$.
 \end{lemma}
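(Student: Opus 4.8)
The plan is to exhibit, by \eqref{eq3}, a single error polynomial $f(x)$ with $w(f)\leq k=(p-1)/2+f$ such that $(x-1)^{m_0}(x+1)^{m_1}\mid S_u(x)+f(x)$ with $m_0+m_1\geq 2p-1-L_{[f/2]}(v)$; since the lemma is a pure upper bound, only such a construction is needed. I would aim for the split $m_0=p-1$ and $m_1=p-L_{[f/2]}(v)$, treating the $(x-1)$-factor through the sequence $q$ (where $L_{(p-1)/2}(q)=1$ is cheap by \eqref{eq7}) and the $(x+1)$-factor through $v$. The two sides are linked exactly as in the proof of Lemma \ref{l3}: by \eqref{eq5} and \eqref{eq6} one has $S_u(x)\equiv S_q(x)\pmod{(x-1)^p}$, while after the substitution $x\mapsto -x$ the $(x+1)$-factor is governed by $S_m-S_l+1\equiv S_v\pmod{(x-1)^p}$.

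For the $(x-1)$-factor I would reuse the even-lift device from the proof of Lemma \ref{l4}. Since $L_{(p-1)/2}(q)=1$, there is $g(x)$ with $w(g)\leq (p-1)/2$ and $S_q(x)+g(x)\equiv 0\pmod{(x-1)^{p-1}}$; lifting $g$ by sending each odd exponent $a_i\mapsto a_i+p$ gives an even polynomial $\tilde g$ of the same weight with $\tilde g(x)\equiv g(x)\pmod{(x-1)^p}$, whence $S_u+\tilde g\equiv 0\pmod{(x-1)^{p-1}}$ and $m_0=p-1$ is reached at a cost of only $(p-1)/2$ terms. The coupling is the price one pays: on the other factor the even lift contributes $g(-x)$, and since $g\equiv -S_q\pmod{(x-1)^{p-1}}$ while $S_q-S_v\equiv 2(S_j+S_l)\pmod{(x-1)^p}$ by \eqref{eq6}, the induced $(x+1)$-residual equals $-2(S_j+S_l)$ evaluated at $-x$, modulo $(x+1)^{p-1}$.

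It remains to annihilate this residual on the $(x+1)$-side without disturbing the $(x-1)$-side. For this I would add a purely one-sided term $\tfrac12(1-x^p)\psi(x)$, which is divisible by $(x-1)^p$ and congruent to $\psi$ modulo $(x+1)^p$, and hence costs $2\,w(\psi)$ nonzero terms. By the very definition of the $k$-error linear complexity of the period-$p$ sequence $2(S_j+S_l)$, there is such a $\psi$ with $w(\psi)\leq [f/2]$ that cancels the residual to order $p-L_{[f/2]}\bigl(2(S_j+S_l)\bigr)$, giving $m_1$ of this size. Then \eqref{eq3} yields $L_k(u)\leq 2p-(p-1)-\bigl(p-L_{[f/2]}(2(S_j+S_l))\bigr)=L_{[f/2]}\bigl(2(S_j+S_l)\bigr)+1$, at total weight at most $(p-1)/2+2[f/2]\leq (p-1)/2+f=k$. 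This weight-doubling of the one-sided term is exactly what converts $[f/2]$ admissible errors on the base sequence into $f$ extra errors on $u$, and it is what produces the integer part in the statement.

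The main obstacle is this last identification: the residual is governed by the two-class cyclotomic sequence $2(S_j+S_l)$ rather than by $v$ itself, so the construction a priori only gives $L_k(u)\leq L_{[f/2]}(2(S_j+S_l))+1$. Reconciling this with the stated bound requires $L_{[f/2]}\bigl(2(S_j+S_l)\bigr)=L_{[f/2]}(v)$, which I would deduce from \cite{AMW}: the $k$-error linear complexity over $\mathbb{F}_p$ is invariant under the decimations coming from the multiplier group of fourth powers and under the class-shift $x\mapsto x^{\theta}$, which cyclically permutes $D_0,\dots,D_3$; under the optimal-autocorrelation constraints on $(m,j,l)$ the support $D_j\cup D_l$ lies in the same orbit as the support of $v$, so the profile \eqref{eq8} transfers verbatim. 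Verifying this orbit identification for each admissible triple $(m,j,l)$ is the part that needs genuine care; the rest is the bookkeeping described above.
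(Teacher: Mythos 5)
The paper never writes out a proof of Lemma \ref{l5} (it is merely asserted to follow ``with similar arguments as above''), so I can only measure your construction against the argument the author evidently intends, namely the scheme of Lemma \ref{l4} with the roles of the two factors rebalanced. Your architecture is exactly that: the even lift of a weight-$(p-1)/2$ correction for $q$ (using $L_{(p-1)/2}(q)=1$ from \eqref{eq7}) to reach $m_0=p-1$, the one-sided term $\tfrac12(1-x^p)\psi(x)$ of weight $2w(\psi)$ to treat the $(x+1)$-factor, and the budget $(p-1)/2+2[f/2]\leq k$ that produces the integer part $[f/2]$. Your computation of the induced residual $2\bigl(S_j(x)+S_l(x)\bigr)$ modulo $(x+1)^{p-1}$ is also correct (all exponents in $H_n$ are odd, so $S_n(-x)=-S_n(x)$). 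All of this is sound and is surely the intended route.

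The gap is the last step. You need $L_{[f/2]}$ of the residual, which after $x\mapsto -x$ is a scalar multiple of the two-class indicator $\sum_{i\in D_j\cup D_l}x^i$, to be at most $L_{[f/2]}(v)$, and you try to get the stronger claim that it \emph{equals} $L_{[f/2]}(v)$ by a decimation/orbit argument. That identification is false: $v$ takes the values $1,1,-1$ on $\{0\}\cup D_m\cup D_l$, while the residual is constant on $D_j\cup D_l$ and vanishes at $0$, and no decimation $x\mapsto x^d$ carries one value pattern to the other. Concretely, for $(m,j,l)=(0,1,3)$ one has $|j-l|=2$, so $D_j\cup D_l$ is the set of quadratic non-residues; then $2(S_j+S_l)+1\equiv (x-1)^{p-1}-\sum_i\left(\tfrac{i}{p}\right)x^i \pmod{(x-1)^p}$, whose linear complexity is $(p+1)/2$, so the residual has $1$-error linear complexity at most $(p+1)/2<3(p-1)/4+1=L_1(v)$: the two profiles genuinely differ. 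Fortunately only the inequality $L_{[f/2]}\bigl(2(S_j+S_l)\bigr)\leq L_{[f/2]}(v)$ is needed, and it appears to hold in every admissible case by the order-four computations of \cite{AMW} (strictly, in the example above), but this must be verified directly for the two-class sequences $D_j\cup D_l$ arising from each triple; it does not follow from invariance under multipliers. An alternative repair is to exploit the sign freedom in the lift (each error at residue $r$ may sit at position $r$ or $r+p$, flipping its contribution modulo $(x+1)^p$) so as to steer the residual toward a sequence whose $k$-error profile is already known to be dominated by that of $v$.
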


\section{Conclusion}
We investigated the $k$-error linear complexity over $\mathbb{F}_p$
of
 sequences of length $2p$ with optimal three-level
autocorrelation. These balanced sequences  are constructed by
cyclotomic classes of order four using a method presented by  Ding
et al. We obtained the upper and lower bounds  of $k$-error linear
complexity and determine the exact values of the $k$-error linear
complexity $L_k(u)$ for $1\leq k< (p-1)/4$ and $ (p-1)/4+2\leq k<
(p-1)/3$.

\end{document}